\newcommand{\D}{\mathfrak{D}}
\newcommand{\Inc}{\mathrm{Inc}}
\newcommand{\Exc}{\mathrm{Exc}}
\let\oldsubstack\substack
\renewcommand{\substack}[1]{\mathclap{\oldsubstack{#1}}}
\title{\large\bfseries\scshape The Inherent Instability of Disordered Systems}
\author{\small Taeer Bar-Yam}
\author{\small Owen Lynch}
\author{\small Yaneer Bar-Yam}
\affil{\vspace{-2mm}\footnotesize New England Complex Systems Institute\\
277 Broadway Cambridge, MA 02139}
\date{\vspace{-0.5cm}\small \today}
\begin{document}
\maketitle

\begin{abstract}
The Multiscale Law of Requisite Variety is a scientific law relating, at each scale, the variation in an environment to the variation in internal state that is necessary for effective response by a system. While this law has been used to describe the effectiveness of systems in self-regulation, the consequences for failure have not been formalized. Here we use this law to consider the internal dynamics of an unstructured system, and its response to a structured environment. We find that, due to its inability to respond, a completely unstructured system is inherently unstable to the formation of structure. And in general, any system without structure above a certain scale is unable to withstand structure arising above that scale. To describe complicated internal dynamics, we develop a characterization of multiscale changes in a system. This characterization is motivated by Shannon information theoretic ideas of noise, but considers structured information. We then relate our findings to political anarchism showing that society requires some organizing processes, even if there is no traditional government or hierarchies. We also formulate our findings as an inverse second law of thermodynamics; while closed systems collapse into disorder, systems open to a structured environment spontaneously generate order.
\end{abstract}

\pagebreak
\doublespacing

In this paper we explore the dynamics of disordered systems. The traditional approach to studying systems, which considers macroscopic forces and microscopic thermodynamics, has been extended by information theoretic models which describe systems by the scales at which they are organized~\cite{cprof}. Prior works have also used information theoretic models to characterize whether a system functions successfully. However, they have only considered whether systems succeed or fail, not the implications of that failure~\cite{ashby,msvariety}. They also only consider systems functioning completely according to specification. By extending those frameworks to consider deviations from expected behavior we can understand the structural changes that result from failures in certain contexts.

Our objective is to characterize how the structure of a system changes spontaneously and/or as it interacts with an environment. We find that in the context of an ordered environment order arises in previously disordered systems. We use multiscale information theory to define the structure of a system and Ashby's Law of Requisite Variety~\cite{ashby} to analyze the failure conditions that give rise to change. Where internal self-organizing structures occur, we treat them as part of the external environment, enabling the application of Ashby's law to spontaneous internal change. For this purpose we will consider any change to the structure of the system to be a failure on the part of the system to maintain its state.

We start in Section \ref{simple} by considering systems consisting of coordinated blocks of parts, and then in Section \ref{infotheory} we expand our analysis to systems with much more general structure, defined by the behavior of different parts of the system sharing information. In section \ref{conclusion} we present our main conclusion that order arises in disordered systems, and relate it to the second law of thermodynamics, which describes the increase in disorder of a closed system. We also describe implications for social systems and the philosophy of anarchism.

\section{\scshape Emergence of Order in Simple Systems}
\label{simple}

Given a system in an environment, where the environment can be in each of $v$ possible states that each requires a distinct response by the system in order for it to ``succeed'', the system tautologically needs $v$ responses available to it in order to guarantee success. This mathematical theorem is the Law of Requisite Variety stated by Ashby~\cite{ashby}. A multiscale version of this law has been stated~\cite{msvariety}, which adds the assumption that the environmental behaviors occur at different scales, and that the system must respond at the appropriate scale. In a simplified model, the system is assumed to be composed of $N$ atomic parts (these could be atoms, molecules, cells, people, etc.), arranged into subsystems that have perfect internal coordination. The number of parts in a given subsystem determines the scale of action it can take. An environmental change at scale $k$ by definition requires an action performed by $k$ parts in coordination, and thus only subsystems of size $k$ or greater will be capable of responding.

For example constructing a building requires the coordinated effort of many people. Furthermore, if buildings on sand, dirt, and rocks each require a different method of construction, then not only is a coordinated group needed, but one which can perform three different types of construction.

The development of the Multiscale Law of Requisite Variety leads to the notion of a trade-off between coordination and flexibility. A system can have many parts coordinated to allow large scale responses, or it can have those parts independent to allow independent, and varied responses~\cite{msvariety}. For example, if a person has the capacity to know how to construct up to ten different things, then each can learn ten ways of making a small hut by themselves, so that there are fifty different small buildings that can be built, or they can coordinate to learn to construct ten types of larger buildings. We can formalize this trade-off by defining $v$ as the variety of an individual part (which for simplicity we assume to be the same for all parts), and $n(k)$ as the number of different $k$-member fully coordinated subsystems. Then $D(k) = vn(k)$ is the variety in response at a particular scale for the whole system, and $C(k) = \sum_{k\'=k}^N D(k\')$ is the variety in responses at scale $k$ or greater, the ``complexity'' at scale $k$ (See Figure~\ref{comp}). The key result then, is that when we sum over $C(k)$ at all scales, we obtain the total variety across all parts of the system treated independently,
\begin{equation}
  \sum_{k=1}^N C(k) = Nv.
\end{equation}
This quantity depends only on the number of parts and the variety of those parts, not their organization. It is therefore fixed with respect to reorganizations of the system. Any reorganization that increases the complexity at one scale must simultaneously decrease the complexity at some other scales. As parts become linked to each other in behavior, the scale increases, and the variety at the individual scale across the system decreases. The correct balance of this trade-off depends on the environment the system must respond to: if the environment can take on many small scale states that must be responded to differently, an effective system will have many independent parts that can respond independently. If the environment poses large scale threats, an effective system will have global coordination.
\renewcommand{\thesubfigure}{\alph{subfigure}}
\begin{figure}
  \begin{subfigure}[b]{0.5\textwidth}
    \renewcommand{\captionfont}{\Large}
    \captionsetup{justification=justified,singlelinecheck=false}
    \caption{}
    \includegraphics{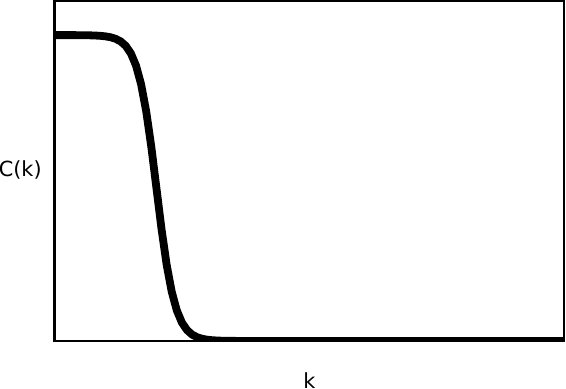}
    \label{compgas}
  \end{subfigure}
  \hspace{0.5cm}
  \begin{subfigure}[b]{0.5\textwidth}
    \renewcommand{\captionfont}{\Large}
    \captionsetup{justification=justified,singlelinecheck=false}
    \caption{}
    \includegraphics{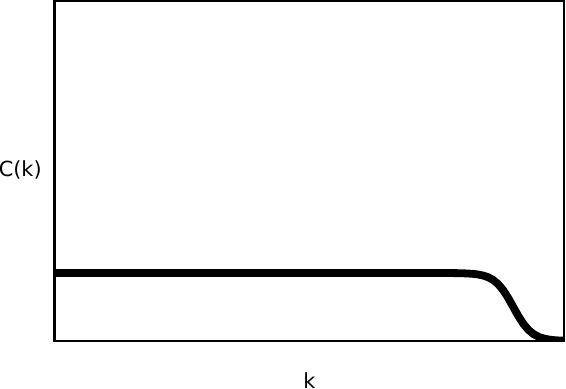}
    \label{comprb}
  \end{subfigure}

  \caption{Example Complexity Profiles. (\protect\subref{compgas}) Shows the complexity profile of a gas or other disordered system reflecting that at a fine scale every particle has an independent position and momentum which must be specified to describe the state of the system. At larger scales the different states are indistinguishable from each other, and so the variety at larger scales is zero. (\protect\subref{comprb}) Shows the complexity profile of a rigid body. A rigid body is relatively easy to describe, since all of the particles have the same momentum and the same position relative to each other, so only one must be specified. However, the position, movement, and orientation of the rigid body is information carried by all of the parts, and so is distinguishable at a large scale.}
  \label{comp}
\end{figure}

This formalism has been used~\cite{msvariety} to show that although central control and hierarchy create large-scale coordination, they limit the complexity of a system at smaller scales (See Figure~\ref{comprb}), which makes the system unable to function effectively in environments with complexity at many scales, referred to as ``complex'' environments. Here we consider systems without any coordination, where each part of the system acts entirely independently.

A first observation we can make is that a disordered system is unfit to respond to any variation in the environment at a scale larger than the individual parts. For example, if a social system of this type were faced with a large drought, it would be unable to bring in large quantities of water from elsewhere, as that would require a coordinated effort to e.g. build pipes. Beyond being unsuited to certain environments, though, a disordered system is inherently unstable to the emergence of order. To see this, we must extend our discussion to say something about the internal dynamics of systems.

When we talk about a system having a structure, we are talking about a set, or ensemble, of states the system is allowed to be in, possibly augmented by probabilities of the system being in each state. The system state changes within this ensemble, but given a static structure the ensemble itself does not change. One can also describe the dynamics of the structure as the ensemble changes. Consider a box with a partition separating two gasses. If one removes the partition, the ensemble of allowed states changes, and therefore the complexity profile also changes. Structural change is the \textit{breakdown} of expected behavior. When someone gets cancer, we consider this a departure from the expected functioning of a human body. In a society, revolution is the birth of a new social order, and the death of an old one. This begins with part of the system acting in a way that violated the defined behavior of the original social order. Even the construction of a new building can be considered to be the breakdown of the system that didn't have a building there. When a system stops functioning `properly', the complexity profile changes to reflect the new behavior of the system. Generally, parts of the system might partially deviate in their behavior, such as a cell that continues to function but over-replicates, or a citizen who goes to work but at night vandalizes buildings. We will consider this general case in more detail later. For now, we make the simplifying assumption there is a well defined subset of the system's parts that has taken on entirely new behavior.

The examples of cancer and revolution can be considered in the same way as external threats, like an infection or invasion respectively. There is on the one hand the system, and on the other hand the problem which the system must solve, even if the problem originates within the old system. Mathematically: when a subset of the system takes on a new behavior, that subsystem becomes considered part of the environment that the system must now deal with as an external threat. In order to maintain itself in the face of such a disturbance, the system must be able to either re-program those parts to their appropriate behavior and thus reabsorb them, or destroy/eject and replace them. This occurs in biological systems with processes such as autophagy or apoptosis, in which misbehaving cells are either consumed by other cells or forced to self-destruct using chemical signaling, and in social systems with reeducation, ostracization, imprisonment, and capital punishment.

When we consider a system defined by its lack of coordination (See Figure~\ref{compgas}), the emergence of order at a large scale $\kappa$ is just such a violation of defined behavior. As per our model, we then consider this a part of the environment, and in particular, an environmental event occurring at scale $\kappa$. By the multiscale law of requisite variety, the system is unable to respond effectively to such an event, so it can neither reabsorb the malfunctioning component nor ensure that it is properly ejected. The system change cannot be reversed. 

We can easily extend this reasoning to systems with structure at a scale larger than the individual using the same reasoning as above. We deduce that as long as a system lacks structure above a certain scale, it is not robust to structure arising above that scale.

There are several assumptions we have made in coming to this conclusion that should be reiterated. First, we have assumed that deviation occurs in the functioning of the system. While all real systems deviate to some extent, over short periods of time systems may be best approximated as acting flawlessly. Second, considering the emergence of large scale structure, we assume that coordination is possible. In social and biological systems, this is apparent, but not at every scale. For instance until recently global communication among people has been quite limited. In certain physical systems such as a noble gas, there can be almost no interaction between particles. Without the ability to interact and coordinate, no larger scale structure (aside from fluctuations) can arise, and so the system is stable. Finally, an environment that has extremely high variety at a small scale will not allow large scale structure to persist. In such a case, any structure that develops at the large scale necessarily sacrifices variety to do so, and therefore leaves itself unable to handle the variety of the environment. In this case the large scale structure would be unable to maintain itself, and would collapse back into unstructured parts. The system would then in some sense be maintaining itself. However, if large scale structure \textit{can} arise, the unstructured system itself will be unable to respond effectively.

We also assumed for simplicity that there is a well defined subsystem that misbehaves completely. We can extend our model to the general case in which there is no clearly defined subsystem that has been co-opted, but the system is not behaving according to its defined behavior. As a motivating example, consider a change in the priorities of the population that causes an arbitrarily small change in the standard work week from $40$ hours to $40-\ep$. For the sake of this example we suppose that this policy causes people to spend less time at work, and more time at home with their families or out with their friends. The structure of the system has changed; relationships have become weaker or stronger, and people will therefore be coordinated in different ways. However, if we imagine trying to reassert the old system and return the coordination to what it was before, there is no smaller part of the system to reprogram or eject; every part of the system is affected. In our current model, we must say that the entire system is different. On the other hand, the change can be made arbitrarily small. People are acting essentially the same as they were before. The question, then, is what variety and scale of responses is needed to effectively respond to the \textit{change} in the system rather than to the whole new system.

The need to quantify the change in the system makes clear why the assumption of completely deviating subsystems simplified the analysis previously. When we have completely deviating subsystems, the variety in the new system and the variety in the change are equivalent. Once we formalize the more general case, we will prove this assertion directly.

\section{\scshape Information Theoretic Formalism for Structural Change}
\label{infotheory}
To answer the new question---what is the variety and scale in the change of the system---we introduce an information theoretic formalism. Until now we have defined our system as composed of distinct subsystems that are fully coordinated internally and uncoordinated between subsystems. Now we consider a general system composed of a set of atomic parts $A$ that can each be in a set of states $S_a (a\in A)$. Dependencies between these parts can be seen in the information that is redundant between them (the shared information).
We summarize previous work formally defining the complexity profile on these systems~\cite{cprof}. In some places we use new notation that simplifies the definitions. Given atomic parts $A=\set{a_1,\ldots,a_n}$, we consider formal expressions of the form $a_{i_1};\cdots;a_{i_k}|a_{j_1},\ldots,a_{j_l}$ (where two are the same if they differ only by the ordering of elements separated by `$;$' or `$,$'). These are \textit{dependencies} and they represent regions in the information Venn diagram (See Figure~\ref{H(a)}). One can think of `$;$' as union, `$,$' as intersection, and `$|$' as set difference.

In this paper we let $\Inc(x)$ for a dependency $x$ denote the set of ``included'' atomic parts (those before the `$|$') and $\Exc(x)$ the set of ``excluded'' atomic parts (those after the `$|$'). It is worth noting that dependencies are completely defined by their included and excluded parts.

If all atomic parts are used exactly once in a dependency, it is an \textit{irreducible dependency} and it represents one of the minimal regions in the Venn diagram (See Figure~\ref{H(a)}). Other dependencies are reducible and represent larger regions of the ven diagram, unions of irreducible dependencies. Formally a dependency $x$ is irreducible if $\Inc(x)\sqcup\Exc(x)=A$, and reducible otherwise. Let $\D_\A$ be the set of irreducible dependencies.

The intuition of dependencies as regions in the Venn diagram appears in the formalism when we define a function $I_\A$ on dependencies that gives us the information in that region. We assume that we have an entropy function $H$ on sets of atomic parts; one such function is the Shannon entropy. We define $I_\A$ by a system of equations, one for each set of atomic parts $B\ins A$:
  \begin{equation} \label{eq:defineI}
    \sum_{\substack{x\in\D_\A\\\Inc(x)\cap B\ne\0}}I_\A(x) = H(B).
  \end{equation}
  There are $2^{\abs{A}}$ independent equations in $2^{\abs{A}}$ unknowns, which gives us a unique solution for all of the irreducible $I_\A(x)$. An example of one of these equations, $I_\A(a|b,c) + I_\A(a;b|c) + I_\A(a;c|b) + I_\A(a;b;c) = H(a)$, is depicted in Figure~\ref{H(a)}. 
 
  Reducible dependencies define larger regions in the Venn diagram. We wish to define $I_\A(x)$ as the sum over the information in the irreducible dependencies whose regions are contained in the region given by $x$. In this paper, to capture the idea of one dependency being contained in another we define a partial ordering on dependencies where $x\le y$ when the region defined by the dependency $x$ is contained in the region defined by the dependency $y$. Formally, we have $x\le y$ when $x$ includes and excludes all of the parts $y$ does (and possibly more), or in equations
  \begin{equation}
    x\le y \iff \Inc(x)\supseteq\Inc(y)\wedge\Exc(x)\supseteq\Exc(y).
  \end{equation}
  Now we complete the definition of $I(x)$. For some (not necessarily irreducible) dependency $x$,
  \begin{equation}
    I_\A(x)=\sum_{\substack{z\in\D_\A\\z\le x}}I_\A(z).
  \end{equation}

  If two systems $\A$ and $\B$ contain some parts in common ($A\cap B\ne\0$), and a dependency $x$ contains only parts in both ($\Inc(x)\cup\Exc(x)\ins A\cap B$) then $x$ is a dependency in both systems. We prove in Appendix~\ref{Iwelldefined} that the information in $x$ is independent of which we calculate it in,
  \begin{equation}
    I_\A(x)=I_\B(x).
  \end{equation}
  Since the subscript is irrelevant, hereafter we write $I(x)$.

\def\firstcircle{(0,0) circle (1.5cm)}
\def\secondcircle{(60:2cm) circle (1.5cm)}
\def\thirdcircle{(0:2cm) circle (1.5cm)}
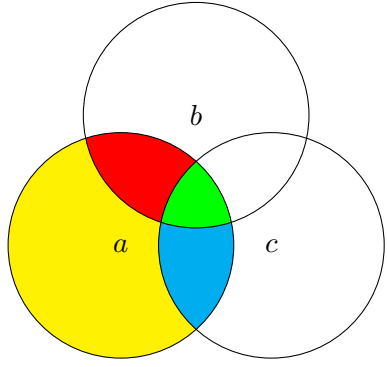
\begin{figure}
  \centering
  \begin{tikzpicture}
    \begin{scope}
      \clip \secondcircle;
      \fill[red] \firstcircle;
    \end{scope}

    \begin{scope}
      \clip \thirdcircle;
      \fill[cyan] \firstcircle;
    \end{scope}

    \begin{scope}
      \clip \thirdcircle;
      \clip \secondcircle;
      \fill[green] \firstcircle;
    \end{scope}
    \begin{scope}
      \begin{scope}[even odd rule]
        \clip \thirdcircle (-3,-3) rectangle (3,3);
        \clip \secondcircle (-3,-3) rectangle (3,3);
        \fill[yellow] \firstcircle;
      \end{scope}
      \draw \firstcircle node {$a$};
      \draw \secondcircle node {$b$};
      \draw \thirdcircle node {$c$};
    \end{scope}
  \end{tikzpicture}
  \caption{A Venn diagram with some of the irreducible dependencies colored in. Green is $(a;b;c)$, Red is $(a;b|c)$, Blue is $(a;c|b)$, Yellow is $(a|b,c)$. When we sum over $I(r)$ for all of the colored regions/dependencies $r$, we get $H(a)$, the total entropy in the variable $a$. Other dependencies can all be seen as unions of irreducible dependencies. For instance, the yellow and blue regions combined are $(a|b)$, blue and green together are $(a;c)$, etc.}
  \label{H(a)}
\end{figure}

We assume for simplicity that all parts have the same scale and we can define $s(x)$, the scale of a dependency $x$, as $s(x) = \abs{\Inc(x)}$, the number of components who share information in this dependency. Then $C_\A(k)$, the complexity of $\A$ at scale $k$, is defined by
\begin{equation}
  C_\A(k) = \sum_{\substack{x\in\D_\A\\ s(x)\ge k}}I(x).
\end{equation}
A visualization of $C(k)$ for a three part system is pictured in Figure~\ref{C(k)}.

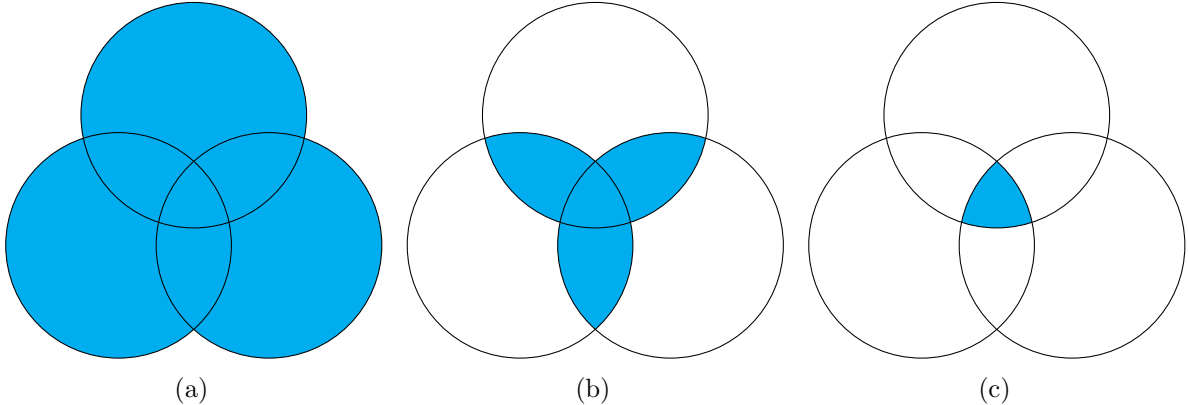
\begin{figure}
  \centering
  \begin{subfigure}[b]{0.3\textwidth}
    \begin{tikzpicture}
      \begin{scope}
        \fill[cyan] \firstcircle;
        \fill[cyan] \secondcircle;
        \fill[cyan] \thirdcircle;
      \end{scope}
      \draw \firstcircle node {};
      \draw \secondcircle node {};
      \draw \thirdcircle node {};
    \end{tikzpicture}
    \caption{}
    \label{C(1)}
  \end{subfigure}
~
  \begin{subfigure}[b]{0.3\textwidth}
    \begin{tikzpicture}
      \begin{scope}
        \clip \thirdcircle;
        \fill[cyan] \firstcircle;
      \end{scope}
      \begin{scope}
        \clip \secondcircle;
        \fill[cyan] \firstcircle;
      \end{scope}
      \begin{scope}
        \clip \secondcircle;
        \fill[cyan] \thirdcircle;
      \end{scope}
      \draw \firstcircle node {};
      \draw \secondcircle node {};
      \draw \thirdcircle node {};
    \end{tikzpicture}
    \caption{}
    \label{C(2)}
  \end{subfigure}
~
  \begin{subfigure}[b]{0.3\textwidth}
    \begin{tikzpicture}
      \begin{scope}
        \clip \thirdcircle;
        \clip \secondcircle;
        \fill[cyan] \firstcircle;
      \end{scope}
      \draw \firstcircle node {};
      \draw \secondcircle node {};
      \draw \thirdcircle node {};
    \end{tikzpicture}
    \caption{}
    \label{C(3)}
  \end{subfigure}
  \caption{A visual representation of the information contributing to $C(k)$. (\protect\subref{C(1)}) $C(1)=H(\A)$ contains all the information in $\A$, (\protect\subref{C(2)}) $C(2)$ has all the information shared among two or more atomic parts, and (\protect\subref{C(3)}) $C(3)$ has only the information shared among all three.}
  \label{C(k)}
\end{figure}

We now develop a new formalism for system change motivated by the treatment of noise in Shannon information theory~\cite{shannon}. Noise changes a signal so that it is not received as it was intended. Similarly, we would like to describe a system that is not behaving as it is intended, and we are describing that system by the information in its behavior. In his theory of communication, Shannon formalizes noise by considering the intended signal $x$ and the received signal $x\'$ together. He looks at the entropy of one given that the other is known: the conditional entropy $H(x\'|x)$ (which he writes $H_x(x\')$). This is the amount of new information in the noisy signal, that was not present without noise. We will do something similar in considering the multiscale noise associated with a distorted system.

We have an ideal system $\A$ and the ``noisy'' version $\A\'$. In order to study the inter-system conditional entropies, we will consider both systems as part of an expanded system $\A\cup\A\'$. As in communication theory, we are interested in the new information. Unlike communication, we are interested in the new information in each dependency; if there was information present in the system that now appears at a larger scale, this is relevant to the change in structure of the system. We therefore define $I(x|y)$ for \textit{dependencies} $x$ and $y$ (the new information in the dependency). We want $I(x|y)$ to be the information in the dependency $x$, given we know the information in the dependency $y$. On the Venn diagram, if $x$ and $y$ represent regions $R$ and $S$, then $(x|y)$ represents the part of the region $R$ that is not part of $S$. $I(x|y)$, then, is the sum over the information in irreducible dependencies representing regions contained in $R$ but not in $S$:
\begin{equation}
  I(x|y) = \sum_{\substack{z\in\D_{\A\cup\A\'}\\
      z\le x\\
      z\not\le y
  }} I(z) 
\end{equation}
For our purposes, this level of generality suffices. For a definition of $I$ on a generalized dependency representing any region of a system's Venn diagram see Appendix~\ref{generalI}.

We can now define the complexity as a function of scale for the noise in the system,
\begin{equation}
  C_{\A\'|\A}(k) = \sum_{\substack{x\in\D_\A\\ s(x)\ge k}}I(x\'|x)
\end{equation}
where $x\'$ is the equivalent of $x$ in the primed system (See Figure~\ref{noise}).
\newcommand{\cylinder}[5]{
  #5
    \fill[top color=#3!50!#4,bottom color=#3!10,middle color=#3,shading=axis,opacity=0.15] (0+#1,0+#2) circle (2cm and 0.5cm);
  \fi
  \fill[left color=#3!50!#4,right color=#3!50!#4,middle color=#3!50,shading=axis,opacity=0.15] (2+#1,0+#2) -- (2+#1,4+#2) arc (360:180:2cm and 0.5cm) -- (-2+#1,0+#2) arc (180:360:2cm and 0.5cm);
  \fill[top color=#3!90!,bottom color=#3!2,middle color=#3!30,shading=axis,opacity=0.15] (0+#1,4+#2) circle (2cm and 0.5cm);
  \draw (-2+#1,4+#2) -- (-2+#1,0+#2) arc (180:360:2cm and 0.5cm) -- (2+#1,4+#2) ++ (-2,0) circle (2cm and 0.5cm);
  \draw[densely dashed] (-2+#1,0+#2) arc (180:0:2cm and 0.5cm);
}
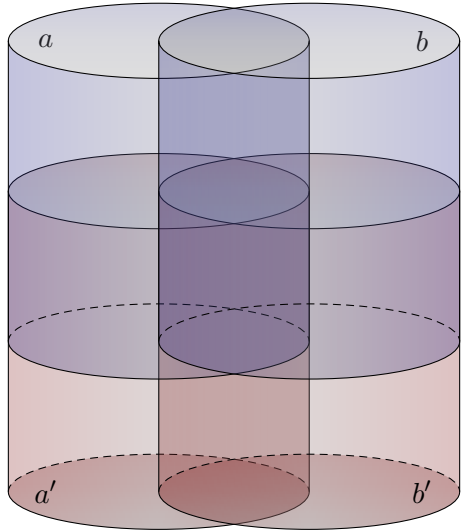
\begin{figure}
  \centering
  \begin{tikzpicture}
    \definecolor{redlight}{HTML}{FF8888}
    \definecolor{reddark}{HTML}{FF0000}
    \definecolor{bluelight}{HTML}{8888FF}
    \definecolor{bluedark}{HTML}{0000FF}
    \cylinder{0}{0}{redlight}{reddark}{\iftrue}
    \node at (-1.5,6) {$a$};
    \cylinder{2}{0}{redlight}{reddark}{\iftrue}
    \node at (3.5,6) {$b$};
    \cylinder{0}{2}{bluelight}{bluedark}{\iffalse}
    \node at (-1.5,0) {$a^\prime$};
    \cylinder{2}{2}{bluelight}{bluedark}{\iffalse}
    \node at (3.5,0) {$b^\prime$};
  \end{tikzpicture}

  \caption{$C_{\A^\prime|\A}(k)$ can be visualized by taking the Venn diagram for the two systems $\A$ (the information describing $\A$ is represented in the blue region) and $\A^\prime$ (the information describing $\A^\prime$ is represented in the red region), placing them on top of one another and extruding them into each other (the resulting overlap, which has the information shared between $\A$ and $\A^\prime$, is both blue and red making it appear purple). The regions now represent some of the information in the joint system $\A\cup\A^\prime$. The region at the bottom colored only in red is the information counted in $C_{\A^\prime|\A}$, the new information in the dependencies of $\A^\prime$ that wasn't in those dependencies in $\A$. This is not the complete Venn diagram for the expanded system $\A\cup\A^\prime$, but it is visually intuitive and has the regions that are necessary to define $C_{\A^\prime|\A}(k)$.}
  \label{noise}
\end{figure}

We note that the existence of negative complexities has been proven, and presents difficulties for interpretation within Ashby's law. For instance, it is unclear why negative complexity can cancel with positive complexity when systems are combined. For our purposes, we will assume $I(x)$ is always non-negative.

We note a few simple cases that test the reasonableness of our definition for $C_{\A\'|\A}(k)$ and shows that it agrees with our original analysis (See Section~\ref{simple}) in the simple case. Proofs for these statements can be found in Appendix~\ref{sanitycheck}.

\begin{enumerate}
  \item If there is no noise, the noise has complexity $0$ at all scales.
  \item \label{item:newsys} Given an entirely new system $\B$ having no relation to $\A$, the complexity of the noise $C_{\B|\A}$ will just be the complexity of $\B$. 
  \item \label{item:partsys} If $\A$ and $\B$ are independent subsystems composing a larger system $\A\cup\B$ then the complexity of noise affecting the system as a whole is the sum of the noise affecting the subsystems. In particular, if one of the subsystems is unaffected then the complexity of the noise of the whole system equals that of the affected subsystem.
  \item Combining points \eqref{item:newsys} and \eqref{item:partsys} we can recover our prior result of Section \ref{simple} that completely deviating subsystems can be treated as the subsystem becoming part of the environment. In this case, the complexity of the change is the complexity of the altered subsystem.
\end{enumerate}

Our motivation for developing this framework has been to apply the Law of Requisite Variety to the noise complexity profile. We prove in Appendix~\ref{apx:noisehascx} that if the unaltered system $\A$ has no complexity above some scale $\kappa$, and in the altered system $A\'$ there is complexity at scale $\kappa$, then the change $C_{\A'|A}(k)$ has complexity at scale $\kappa$. Applying the law of requisite variety we infer that the system is unequipped to respond to the noise.

\section{\scshape Conclusions}
\label{conclusion}

One implication of the framework we have developed is an inverted second law of thermodynamics. The second law of thermodynamics states that in a closed system entropy, which is equal to scale $1$ complexity~\cite{cprof}, increases. Our analysis states that an open system exposed to a structured environment will develop complexity at all scales. By spontaneous variation larger scale complexity will arise, and by the multiscale law of requisite variety the system will adopt that large scale complexity; over enough time all scales will be reached. The two laws and the parallel between them is depicted in figure~\ref{fig:secondlaw}.
\begin{figure}
  \centering
  \includegraphics{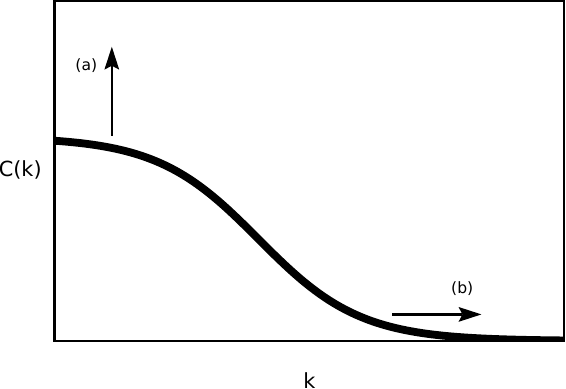}
  \caption{A complexity profile showing the effects of $(a)$ the second law of thermodynamics, which says that in a closed system $C(0)$ increases, and $(b)$ the results of our analysis, which implies that in a system exposed to a structured environment the maximum scale at which the system has complexity increases. As can be seen in the figure, these laws are inverse in that they are reflected across the $x=y$ axis.}
  \label{fig:secondlaw}
\end{figure}

With our general framework, we can investigate a range of topics, including changes in societies. As an example we can explore the implications to a conception of anarchy as a society of completely autonomous individuals. Our analysis shows that such anarchy is unstable to the formation of structure. In Section \ref{simple} we only considered the possibility of a fragment of the society deviating completely; now we can also talk about order emerging gradually within the system. For instance, a trade network could start with neighbors bartering for some of their needs, but then grow to encompass an entire region and become used for more goods, as differences in local production capabilities make long-distance trade desirable, and essential to conduct. By Theorem \ref{noisehascx}, the change that created this trade network has complexity at a scale greater than the individual. As a consequence of the multiscale law of requisite variety, in order for the system to respond to the change it must have complexity at a scale greater than the individual, which it does not. Thus, the system has no means of regulating the emergence of a large-scale structure. Since a large-scale structure is a system breakdown with regard to the definition of an unstructured society, the original system cannot survive---the change will persist.

This concept of anarchy is often presented as a solution to the problems of central control in society~\cite{anarchistfaq}. Our results imply that without an alternative form of large scale coordination, anarchy will ultimately be unsuccessful. Modern anarchists do not assume that there will be no large scale coordination~\cite{smallscaleanarchy}. For instance, anarcho-syndicalists propose organization into worker-groups, and anarcho-capitalists believe in markets. The framework of this paper does not allow us to identify what is a viable replacement for central control. However, developing a theory of which systems are stable and which systems are not stable is an important step towards answering this question.

One of the original works on the failure of anarchy was Leviathan, by Thomas Hobbes, which famously characterized life in such a society as ``nasty, brutish, and short''. Hobbes advocated monarchy as an alternative. Perhaps Hobbes' absolutist view was incubated by his earlier study of physics, which at the time only dealt with rigid bodies and easily solvable equations. In this modern age, we too look to science to explain society. However, rather than relying on analogies to physical systems, we have the ability to study society directly with information theoretic laws and draw more sophisticated conclusions. The multiscale law of requisite variety implies that there is no single organizational ``silver bullet''---different organizational strategies are best suited to different environments. Our addition of a formalism for multiscale noise opens the door to characterizing the stability and dynamics of complex systems. A more general approach may help us discover how to change our society from the inside to better fit the complexity profile of modern challenges.

\section*{Acknowledgements}

We would like to thank Ben Allen and Seth Frey for helpful comments.

\pagebreak
\appendix
\titleformat{\section}[display]
  {\pagebreak\normalfont\bfseries\filcenter}{\scshape \bfseries \large Appendix \thesection}{0ex}
  {\vspace{-1cm}}
\section{}
\label{Iwelldefined}
In this Appendix, we prove a lemma showing that $I_\A(x)$ is independent of the system in which it is calculated, and thus that $I(x)$ is well defined without the subscript. $I_\A(x)$ is defined in the manuscript in equation \eqref{eq:defineI}: For each $U\ins A$,
\begin{equation}
  \sum_{\substack{x\in\D_\A\\\Inc(x)\cap U\ne\0}}I_\A(x)=H(U).
\end{equation}
Note that $H(U)$ does not require a subscript, since entropy is defined on the variables, independent of the enclosing system.
\begin{lemma}
  \label{lemma:Iwelldefined}
  If we have systems $\A$ and $\B$ with overlapping parts, in other words $A\cap B\ne\0$, and we have a dependency $x$ with $\Inc(x)\cup\Exc(x)\ins A\cap B$ so that it can be considered a dependency of either system, then $I_\A(x)=I_\B(x)$. Note that these are exactly the conditions needed for $I_\A(x)$ and $I_\B(x)$ to both be defined. Thus any time $I_\A(x)$ and $I_\B(x)$ are both defined, they are equal.
\end{lemma}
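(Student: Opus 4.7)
The plan is to prove the lemma by showing that both $I_\A(x)$ and $I_\B(x)$ coincide with $I_\mathcal{C}(x)$, where $\mathcal{C}$ denotes the joint system $\A\cup\B$ containing the atomic parts of both; transitivity then yields $I_\A(x) = I_\B(x)$. This reduction is natural because any dependency whose parts all lie in $A$ (resp.\ $B$) is in particular a dependency in $\mathcal{C}$, so a single comparison lemma---run once for $\A$ and once for $\B$---will suffice.

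First I would handle the irreducible case: for each $y \in \D_\A$, show $I_\A(y) = I_\mathcal{C}(y)$ (with the right-hand side computed by the reducible-dependency expansion in $\mathcal{C}$). Since $I_\A$ is characterized as the unique solution of the defining equations \eqref{eq:defineI}, it suffices to check that $y \mapsto I_\mathcal{C}(y)$ satisfies those equations as well. After plugging in and swapping the order of summation, this reduces to proving, for each $U\subseteq A$, the identity
\begin{equation*}
  \sum_{\substack{y\in\D_\A\\\Inc(y)\cap U\ne\emptyset}} I_\mathcal{C}(y) \;=\; \sum_{\substack{z\in\D_\mathcal{C}\\\Inc(z)\cap U\ne\emptyset}} I_\mathcal{C}(z),
\end{equation*}
whose right-hand side equals $H(U)$ by the defining equations for $I_\mathcal{C}$.

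The key combinatorial observation driving the identity is that for each irreducible $z\in\D_\mathcal{C}$ there is a \emph{unique} $y\in\D_\A$ with $z\le y$, namely the one with $\Inc(y)=\Inc(z)\cap A$ and $\Exc(y)=\Exc(z)\cap A$. Existence uses that $\Inc(z)\sqcup\Exc(z)=A\cup B\supseteq A$, so restricting to $A$ partitions $A$ and defines an irreducible $\A$-dependency; uniqueness is immediate, since an irreducible dependency is determined by its included and excluded parts. Moreover, because $U\subseteq A$, one has $\Inc(y)\cap U=\Inc(z)\cap U$, which aligns the two index sets in the displayed identity.

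With the irreducible case in hand, the reducible case---where $x$ itself need not be irreducible in either system but satisfies $\Inc(x)\cup\Exc(x)\subseteq A\cap B$---follows by unwinding $I_\A(x)=\sum_{y\in\D_\A,\ y\le x} I_\A(y)$, substituting the just-proved equality $I_\A(y)=I_\mathcal{C}(y)$, and applying the same correspondence $z\mapsto y$ to rewrite the result as $\sum_{z\in\D_\mathcal{C},\ z\le x} I_\mathcal{C}(z)=I_\mathcal{C}(x)$; the symmetric run with $\B$ then completes the proof. I expect the main obstacle to be bookkeeping in the double-sum exchanges, and specifically verifying that the conditions $z\le y$ and $y\le x$ collapse cleanly to $z\le x$ under the correspondence---a short calculation, but one that requires tracking carefully which side-conditions live in $A$ versus $A\cup B$.
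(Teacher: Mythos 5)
Your proposal is correct and takes essentially the same route as the paper's own proof: both reduce to showing $I_\A(x)=I_{\A\cup\B}(x)$ and appeal to symmetry and transitivity, both handle the irreducible case by verifying that the values computed in $\A\cup\B$ satisfy the defining linear system \eqref{eq:defineI} for $\A$ (invoking uniqueness of its solution), and both settle the reducible case by collapsing the resulting double sum. Your key observation---that each $z\in\D_{\A\cup\B}$ lies below a unique $y\in\D_\A$, namely its restriction $\Inc(y)=\Inc(z)\cap A$, $\Exc(y)=\Exc(z)\cap A$---is precisely the paper's combined no-double-counting and surjectivity argument, packaged somewhat more crisply as a bijective correspondence.
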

\begin{proof}
  We will prove that $I_\A(x)=I_{\A\cup\B}(x)$, and since there is nothing distinguishing $\A$ and $\B$, we will also have $I_\B(x)=I_{\A\cup\B}(x)$ proving the lemma by transitivity. 

  We start by proving it only for irreducible dependencies of $\A$. The equations that define $I$ for an irreducible dependency $x$ of $\A$ are for each $U$ subset of $A$
  \begin{equation}
    \sum_{\substack{x\in\D_\A\\\Inc(x)\cap U\ne\0}}I_\A(x)=H(U)
  \end{equation}
  and caluclating $I(x)$ in $\A\cup\B$ we have 
  \begin{equation}
    I_{\A\cup\B}(x) = \sum_{\substack{y\in\D_{\A\cup\B}\\y\le x}}I_{\A\cup\B}(y).
  \end{equation}
  We want to show, then, that $I_{\A\cup\B}(x)$ satisfies the definition of $I_\A(x)$ for $x\in\D_\A$. Since the definition is a system of linear equations with the same number of equations as unknowns, there is a unique solutions and so we will have that $I_{\A\cup\B}(x) = I_\A(x)$. For all $U\ins A$, we want
  \begin{subequations}
    \begin{align}
      \sum_{\substack{x\in\D_\A\\\Inc(x)\cap U\ne\0}}I_{\A\cup\B}(x) &= H(U)\\
      \sum_{\oldsubstack{x\in\D_\A\\\Inc(x)\cap U\ne\0}}\sum_{\oldsubstack{y\in\D_{\A\cup\B}\\y\le x}}I_{\A\cup\B}(y) &= H(U).\\
      \intertext{In order to collapse this into one sum, we need to make sure there is no double counting, that we never reach the same term in the inner sum while on different iterations of the outer sum. In other words, that the outer sum is over disjoint sums. So we must show given $x\ne x\'\in\D_\A$ and $y,y\'\in\D_{\A\cup\B}$ with $y\le x$ and $y\'\le x\'$, we must show $y\ne y\'$. But since $x,x\'$ are irreducible in $A$ and $x\ne x\'$, $\E a\in A.\; a\in \Inc(x)\cap \Exc(x\')$. Then $y\le x\thus a\in\Inc(y)$ and likewise $a\in\Exc(y\')$, and so $y\ne y\'$. So the sum is over disjoint sums, and we can combine it into one sum:}
      \sum_{\substack{x\in\D_\A\\\Inc(x)\cap U\ne\0\\y\in\D_{\A\cup\B}\\y\le x}}I_{\A\cup\B}(y) &= H(U).\\
      \intertext{If $\Inc(x)\cap U\ne\0$ and $y\le x$, meaning $\Inc(y)\supseteq\Inc(x)$, then $\Inc(y)\cap U\ne\0$. Additionally, for every $y\in\D_{\A\cup\B}$ with $\Inc(y)\cap U\ne\0$, we can let $x$ be $y$ with all of the elements of $B$ removed, and $y\le x$. We also know since $U\ins A$ that $\Inc(x)\cap U\ne\0$ (the overlap is not lost when we restrict to $A$). All of this means we can rewrite as}
      \sum_{\substack{y\in\D_{\A\cup\B}\\\Inc(y)\cap U\ne\0}}I_{\A\cup\B}(y) &= H(U).
    \end{align}
  \end{subequations}
  But this is a subset of the equations defining $I_{\A\cup\B}(y)$.\\

  Now for reducible dependencies of $\A$, we have 
  \begin{subequations}
    \begin{align}
      I_\A(x) &= \sum_{\substack{y\in\D_\A\\y\le x}}I_\A(y)\\
              &= \sum_{\substack{y\in\D_\A\\y\le x}}I_{\A\cup\B}(y)\\
              &= \sum_{\oldsubstack{y\in\D_\A\\y\le x}}\sum_{\oldsubstack{z\in\D_{\A\cup\B}\\z\le y}}I_{\A\cup\B}(y).\\
      \intertext{Just as before, there is no double counting and we can collapse the double sum:}
      &= \sum_{\substack{y\in\D_\A\\y\le x\\z\in\D_{\A\cup\B}\\z\le y}}I_{\A\cup\B}(y).\\
      \intertext{For every $z\in\D_\A$, we can let $y$ be $z$ with all of the elements of $A\setminus B$ removed, and $z\le y\le x$. This means the sum sums over every dependency in $\D_\A$.}
      &= \sum_{\substack{z\in\D_{\A\cup\B}\\z\le x}}I_{\A\cup\B}(y)\\
      &= I_{\A\cup\B}(x)
    \end{align}
  \end{subequations}
\end{proof}

\section{}
\label{generalI}
For a system $\A$, an arbitrary region of the information Venn diagram can be described by some combination of unions (written `$,$'), intersections (written `$;$'), set differences (written `$|$'), and elementary regions denoted by elements $a\in A$. Expressions of this form we call generalized dependencies. We define $I(\xi)$ for some generalized dependency $\xi$ to be the sum of $I(x)$ for all irreducible dependencies $x$ that represent regions contained in the region represented by $\xi$. To complete the definition we must specify which dependencies represent such regions. We can recursively define a truth function $R_\xi(x)$ that tells us if the region defined by $x$ is in the region defined by $\xi$: 
  \begin{subequations}
    \begin{align}
      R_a(x) &= a\in\Inc(x)\\
      R_{(\xi,\zeta)}(x) &= R_\xi(x) \vee R_\zeta(x)\\
      R_{(\xi;\zeta)}(x) &= R_\xi(x) \wedge R_\zeta(x)\\
      R_{(\xi|\zeta)}(x) &= R_\xi(x) \wedge \neg R_\zeta(x)
    \end{align}
  \end{subequations}
  And then formally
  \begin{equation}
    I(\xi)=\sum_{\substack{x\in\D_\A\\R_\xi(x)}}I(x).
  \end{equation}
This agrees with both the intuition about these regions and the ways we've defined $I$ for some types of dependencies in the body of the paper.

For a more systematic development of the correspondence between information theory and set theory that associates generalized dependencies with elements of a $\sigma$-field see~\cite{sigmafield}.

\section{}
\label{sanitycheck}
In this Appendix we prove a few results that show the reasonableness of the definition of $C_{\A\'|\A}(k)$.
\begin{theorem}
  \label{thm:samesys}
  $C_{\A|\A}(k) = 0$
\end{theorem}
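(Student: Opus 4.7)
The plan is to directly unfold the two definitions involved. By definition,
\[
C_{\A|\A}(k) \;=\; \sum_{\substack{x\in\D_\A\\ s(x)\ge k}} I(x'|x),
\]
where $x'$ denotes the counterpart of $x$ in the primed system. Since here the primed system is literally $\A$ itself, we have $x' = x$, so it suffices to show $I(x|x) = 0$ for every $x \in \D_\A$.

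Next I would apply the defining formula
\[
I(x|y) \;=\; \sum_{\substack{z\in\D_{\A\cup\A'}\\ z\le x,\; z\not\le y}} I(z)
\]
with $y = x$. The two conditions $z\le x$ and $z\not\le x$ cannot both hold, so the index set is empty and $I(x|x)=0$ vacuously. Substituting back, every term in the outer sum for $C_{\A|\A}(k)$ is zero, giving the result.

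The only bookkeeping that requires a moment's care is the identification $x' = x$. If one insists on treating $\A'$ as a formally distinct copy of $\A$ whose variables happen to be perfectly correlated with those of $\A$, then in the joint system $\A\cup\A'$ each primed variable is a deterministic (identical) function of its unprimed counterpart, so $H(U\cup U')=H(U)$ for every $U\subseteq A\cup A'$ that meets the unprimed copy. Plugging this into the linear system \eqref{eq:defineI} defining the irreducible $I$ on $\A\cup\A'$ (and invoking Lemma~\ref{lemma:Iwelldefined} so that the information of any dependency on the unprimed variables alone is unchanged) forces the information of every irreducible dependency that isolates a primed variable to vanish, so the conditional sum again collapses to $0$. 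Beyond that small administrative point there is no real content: the theorem is a sanity check that the definitions of $I(\,\cdot\,|\,\cdot\,)$ and $C_{\A'|\A}$ degenerate as expected when the ``noisy'' system coincides with the ideal one.
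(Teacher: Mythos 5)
Your proof is correct and is essentially the paper's own argument: the paper likewise unfolds $C_{\A|\A}(k)$ into sums of $I(x|x)$ and observes that the index set $\{z : z\le x,\ z\not\le x\}$ is empty, so everything vanishes. Your additional remark about treating $\A'$ as a perfectly correlated formal copy is a fine (and slightly more careful) piece of bookkeeping, but it does not change the substance of the argument.
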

\begin{proof}
  \begin{subequations}
    \begin{align}
      C_{\A|\A}(k) &= \sum_{\substack{x\in\D_\A\\ s(x)\ge k}}I(x|x)\\
                   &= \sum_{\oldsubstack{x\in\D_\A\\s(x)\ge k}} \sum_{\oldsubstack{z\in\D_{\A\cup\A}\\z\le x\\z\not\le x}}I(z)\\
                   &= \sum_{\oldsubstack{x\in\D_\A\\s(x)\ge k}}\sum_{z\in\0}I(z)\\
                   &= 0
    \end{align}
  \end{subequations}
\end{proof}

The next few proofs involve the idea of independent systems. This is defined by the independence of the information contained in them, $H(A\cup B) = H(A) + H(B)$. In the following lemma we show that this means that the two systems have no shared information between them.

\begin{lemma}
  \label{lemma:indepsys}
  If $H(A\cup B) = H(A) + H(B)$ and $x\in\D_{\A\cup\B}$ has $\Inc(x)\cap A\ne\0$ and $\Inc(x)\cap B\ne\0$ then $I(x)=0$.
\end{lemma}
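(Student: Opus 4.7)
The plan is to extract $I(x)$ (and in fact the $I$-values of all irreducible dependencies whose included set meets both $A$ and $B$) directly from three carefully chosen instances of the defining equation \eqref{eq:defineI} for the enclosing system $\A\cup\B$. Specifically, I would write down the defining equations with $U=A$, $U=B$, and $U=A\cup B$:
\begin{align*}
\sum_{\substack{y\in\D_{\A\cup\B}\\ \Inc(y)\cap A\ne\0}} I(y) &= H(A), \\
\sum_{\substack{y\in\D_{\A\cup\B}\\ \Inc(y)\cap B\ne\0}} I(y) &= H(B), \\
\sum_{\substack{y\in\D_{\A\cup\B}\\ \Inc(y)\cap (A\cup B)\ne\0}} I(y) &= H(A\cup B).
\end{align*}

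Next I would apply a one-line inclusion–exclusion observation: adding the first two sums double-counts exactly those $y$ whose $\Inc(y)$ meets both $A$ and $B$, while covering every $y$ with $\Inc(y)\ne\0$ (which is precisely the index set of the third sum). This gives
\begin{equation*}
H(A)+H(B) \;=\; H(A\cup B) \;+\; \sum_{\substack{y\in\D_{\A\cup\B}\\ \Inc(y)\cap A\ne\0\\ \Inc(y)\cap B\ne\0}} I(y).
\end{equation*}
Invoking the independence hypothesis $H(A\cup B)=H(A)+H(B)$ collapses this to
\begin{equation*}
\sum_{\substack{y\in\D_{\A\cup\B}\\ \Inc(y)\cap A\ne\0\\ \Inc(y)\cap B\ne\0}} I(y) \;=\; 0.
\end{equation*}

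To finish, I would appeal to the standing non-negativity assumption stated in the body of the paper ($I(x)\ge 0$ for all irreducible dependencies). Every summand on the left is non-negative, so each must vanish individually. The dependency $x$ in the statement of the lemma satisfies $\Inc(x)\cap A\ne\0$ and $\Inc(x)\cap B\ne\0$, so it is one of the terms in this sum, hence $I(x)=0$. The only step that might be viewed as delicate is the inclusion–exclusion bookkeeping, and the only substantive input beyond the definition of $I$ is the non-negativity assumption; without it the same cancellation argument would merely give a vanishing sum rather than termwise vanishing.
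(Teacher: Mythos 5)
Your proof is correct and follows essentially the same route as the paper's: both expand $H(A)$, $H(B)$, and $H(A\cup B)$ via the defining equations \eqref{eq:defineI} in the joint system, observe that the cross terms (those $y$ with $\Inc(y)$ meeting both $A$ and $B$) are counted twice in $H(A)+H(B)$ but once in $H(A\cup B)$, and conclude from independence and non-negativity that each such $I(y)$ vanishes. Your inclusion--exclusion phrasing is just a slight repackaging of the paper's ``compute $H(A\cup B)$ two ways and subtract'' bookkeeping.
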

\begin{proof}
  Our proof rests in calculating $H(A\cup B)$ in two different ways. First, we can calculate it:
  \begin{subequations}
    \begin{align}
      H(A\cup B) &= \sum_{\substack{x\in\D_{\A\cup\B}\\\Inc(x)\cap(A\cup B)\ne\0}}I(x)\\
                 &= \sum_{\substack{x\in\D_{\A\cup\B}\\\Inc(x)\ins A}}I(x) 
      + \sum_{\substack{x\in\D_{\A\cup\B}\\\Inc(x)\ins B}}I(x) 
      + \sum_{\substack{x\in\D_{\A\cup\B}\\\Inc(x)\cap A\ne\0\\\Inc(x)\cap B\ne\0}}I(x).\\
      \intertext{On the other hand we can calculate it:}
      H(A\cup B) &= H(A) + H(B)\\
                 &= \sum_{\substack{x\in\D_{\A\cup\B}\\\Inc(x)\cap A\ne\0}}I(x) 
      + \sum_{\substack{x\in\D_{\A\cup\B}\\\Inc(x)\cap B\ne\0}}I(x)\\
      &= \sum_{\substack{x\in\D_{\A\cup\B}\\\Inc(x)\ins A}}I(x) 
      + \sum_{\substack{x\in\D_{\A\cup\B}\\\Inc(x)\cap A\ne\0\\\Inc(x)\cap B\ne\0}}I(x)
      + \sum_{\substack{x\in\D_{\A\cup\B}\\\Inc(x)\ins B}}I(x)
      + \sum_{\substack{x\in\D_{\A\cup\B}\\\Inc(x)\cap A\ne\0\\\Inc(x)\cap B\ne\0}}I(x)\\
      &= \sum_{\substack{x\in\D_{\A\cup\B}\\\Inc(x)\ins A}}I(x) 
      + \sum_{\substack{x\in\D_{\A\cup\B}\\\Inc(x)\ins B}}I(x)
      + 2\sum_{\substack{x\in\D_{\A\cup\B}\\\Inc(x)\cap A\ne\0\\\Inc(x)\cap B\ne\0}}I(x).\\
      \intertext{Subtracting the two equations, we get}
      0 &= \sum_{\substack{x\in\D_{\A\cup\B}\\\Inc(x)\cap A\ne\0\\\Inc(x)\cap B\ne\0}}I(x).
    \end{align}
  \end{subequations}
  Since this is a sum of non-negative terms, they must all be $0$, which was what we wanted.
\end{proof}

\begin{theorem}
  \label{thm:newsys}
  Given a system $\A$ and a new system $\B$ with no relation to $\A$, formally stated as $H(A\cup B) = H(A) + H(B)$, we have $C_{\B|\A}(k) = C_\B(k)$.
\end{theorem}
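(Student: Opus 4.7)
The plan is to reduce the defining sum of $C_{\B|\A}(k)$ to that of $C_\B(k)$ one term at a time. Since the construction of $C_{\B|\A}$ puts $\B$ in the role played by $\A\'$ earlier, I would take the parts of $\B$ to correspond one-to-one with those of $\A$, yielding a scale-preserving bijection $x \mapsto y$ between $\D_\A$ and $\D_\B$. Then
\begin{equation*}
C_{\B|\A}(k) = \sum_{\substack{x \in \D_\A \\ s(x) \ge k}} I(y|x),
\end{equation*}
and it suffices to establish the termwise identity $I(y|x) = I(y)$ for each such matched pair.

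To prove the identity I would expand both quantities inside the joint system $\A \cup \B$, invoking Lemma~\ref{lemma:Iwelldefined} so that $I(y)$ may be computed there. Subtracting the definitions yields
\begin{equation*}
I(y) - I(y|x) = \sum_{\substack{z \in \D_{\A \cup \B} \\ z \le y \\ z \le x}} I(z).
\end{equation*}
For any such $z$ we have $\Inc(z) \supseteq \Inc(y) \cup \Inc(x)$. Whenever $s(x) = s(y) \ge 1$, $\Inc(y)$ is a nonempty subset of $B$ and $\Inc(x)$ is a nonempty subset of $A$, so $\Inc(z) \cap A \ne \0$ and $\Inc(z) \cap B \ne \0$. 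Lemma~\ref{lemma:indepsys} then forces $I(z) = 0$, and the difference vanishes. The $s = 0$ terms are harmless since $\Inc(y) = \0$ gives $I(y) = I(y|x) = 0$ on both sides.

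Summing the termwise identity over $y$ with $s(y) \ge k$ reindexes the total as $\sum_{y \in \D_\B,\, s(y) \ge k} I(y) = C_\B(k)$, finishing the proof. The hard part is not the calculation but the bookkeeping: tracking the three dependency sets $\D_\A$, $\D_\B$, and $\D_{\A \cup \B}$, keeping the bijection between the first two aligned with the scale function $s$, and interpreting the relation $z \le y$ (as well as $z \le x$) within the joint system. Once that setup is in place, Lemma~\ref{lemma:indepsys} supplies all the substantive content.
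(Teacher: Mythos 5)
Your proposal is correct and follows essentially the same route as the paper's proof: split $I(y|x)$ into $I(y)$ minus the sum over $z\le y,\ z\le x$, observe that any such $z$ has $\Inc(z)$ meeting both $A$ and $B$, and kill those terms with Lemma~\ref{lemma:indepsys}. Your version is if anything slightly more careful than the paper's, which leaves the bijective reindexing and the joint-system bookkeeping implicit (and even has a typo summing over $\D_\A$ where $\D_{\A\cup\B}$ is meant).
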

\begin{proof}
  \begin{subequations}
    \begin{align}
      I(x\'|x) &= \sum_{\substack{y\in\D_\A\\y\le x\'\\y\not\le x}}I(y) \\
               &= \sum_{\substack{y\in\D_\A\\y\le x\'}}I(y) - \sum_{\substack{y\in\D_\A\\y\le x\'\\y\le x}}I(y)\\
      \intertext{Since in the second sum $\Inc(y)\supseteq\Inc(x\')$ and $\Inc(y)\supseteq\Inc(x)$, we have that $\Inc(y)\cap A\ne\0$, and $\Inc(y)\cap B\ne\0$. By lemma~\ref{lemma:indepsys} $I(y)=0$.}
      &= I(x\') - \sum_{\substack{y\in\D_\A\\y\le x\'\\y\le x}}0\\
      &= I(x\')
    \end{align}
  \end{subequations}
\end{proof}

\begin{theorem}
  \label{thm:partsys}
  If $\A\'$ and $\B\'$ are independent systems, that is $H(A\'\cup B\')$ = $H(A\') + H(B\')$, then $C_{\A\'\cup\B\'|\A\cup\B}(k) = C_{\A\'|\A}(k) + C_{\B\'|\B}(k)$.
\end{theorem}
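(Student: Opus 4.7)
The approach is to expand $C_{\A'\cup\B'|\A\cup\B}(k)$ from the definition and split the outer sum over $\D_{\A\cup\B}$ according to the position of $\Inc(x)$ within $A\cup B$, showing that the ``mixed'' contributions vanish while the ``pure-$A$'' and ``pure-$B$'' contributions reduce to $C_{\A'|\A}(k)$ and $C_{\B'|\B}(k)$ respectively.

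I would first observe that every irreducible $x\in\D_{\A\cup\B}$ either has $\Inc(x)\subseteq A$, $\Inc(x)\subseteq B$, or meets both. In the third case, $\Inc(x')$ straddles $A'$ and $B'$, so any $z\leq x'$ in the combined system satisfies $\Inc(z)\cap A'\neq\emptyset$ and $\Inc(z)\cap B'\neq\emptyset$; by Lemma~\ref{lemma:indepsys} applied to $\A'$ and $\B'$, together with the paper's nonnegativity assumption on $I$, each such $I(z)=0$ and hence $I(x'|x)=0$. In the pure-$A$ case, the assignment $x\mapsto x_A$ (the restriction of $x$ to $A$) is a scale-preserving bijection onto $\D_\A$; the goal is to show $I(x'|x)=I(x_A'|x_A)$. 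Parametrizing the relevant $z$ by $(U,V)=(\Inc(z)\cap A,\Inc(z)\cap B)$, the condition $z\leq x'$ fixes $\Inc(z)$ on $A'\cup B'$, and the condition $z\leq x$ forces precisely $(U,V)=(\Inc(x),\emptyset)$. By an extension of Lemma~\ref{lemma:indepsys} to the full system $F=A\cup B\cup A'\cup B'$, the contributions with $V\neq\emptyset$ vanish, leaving the pairs $(U,\emptyset)$ with $U\neq\Inc(x)$; these match term-for-term with the sum defining $I(x_A'|x_A)$ over $\D_{\A\cup\A'}$, and Lemma~\ref{lemma:Iwelldefined} identifies the individual $I$-values. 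The pure-$B$ case is symmetric.

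Adding the three partial sums and invoking $s(x)=s(x_A)$ in the pure-$A$ case (and its analogue for $B$) collapses the classification into $C_{\A'|\A}(k)+C_{\B'|\B}(k)+0$. The main obstacle is justifying the vanishing of contributions with $\Inc(z)\cap B\neq\emptyset$ inside the ambient system $F$: Lemma~\ref{lemma:indepsys} as stated only gives independence between $\A'$ and $\B'$, so in writing the full proof I would either strengthen the hypothesis to $H(A\cup A'\cup B\cup B')=H(A\cup A')+H(B\cup B')$ (the natural reading, consistent with the ``$\A$ and $\B$ independent subsystems'' framing of the preceding informal bullet) or first prove a lemma lifting the independence of $\A'$ and $\B'$ to a vanishing statement on $\D_F$ via Lemma~\ref{lemma:Iwelldefined}.
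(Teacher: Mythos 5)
You take essentially the same route as the paper: the same three-way split of $\sum_{x\in\D_{\A\cup\B}}I(x'|x)$ according to whether $\Inc(x)$ lies in $A$, lies in $B$, or meets both, with Lemma~\ref{lemma:indepsys} plus nonnegativity killing the mixed class and the two pure classes reducing to $C_{\A'|\A}(k)$ and $C_{\B'|\B}(k)$ (the paper regroups the pure-$A$ class as a double sum over $x\in\D_\A$ and refinements $y\le x$ rather than using your restriction bijection, but this is the same bookkeeping). The extra care you invest is not pedantry: the paper's one-line appeal to Lemma~\ref{lemma:indepsys} really does apply it outside its stated scope, and the hypothesis as literally written is genuinely too weak. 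Concretely: let $X,Y$ be independent fair bits, $A=\{a\}$, $B=\{b\}$ with $a=b=X$, and $a'=Y$, $b'=X$. Then $H(\{a',b'\})=H(a')+H(b')$, and $C_{\A'|\A}(1)+C_{\B'|\B}(1)=H(Y|X)+H(b'|b)=1+0=1$; but the only nonzero irreducible regions of the joint four-part system are $I(a;b;b'|a')=1$ and $I(a'|a,b,b')=1$, so $I\bigl((a'|b')\,\big|\,(a|b)\bigr)=1$ and $I\bigl((b'|a')\,\big|\,(b|a)\bigr)=1$, giving $C_{\A'\cup\B'|\A\cup\B}(1)=2$. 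The bit $X$ has migrated from the region $(a;b)$ into $(b'|a')$, which the left side counts as noise but the right side cannot see, because $\B$ on its own is unchanged. So the theorem is false under the literal hypothesis, and terms such as $z=(a;b;b'|a')$ --- with $\Inc(z)$ meeting $A$ and $B'$ but not $A'$ --- are exactly the ``$V\ne\emptyset$'' contributions you isolated.

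This also settles your either/or in favor of the first option. Lifting the independence of $\A'$ and $\B'$ to the full system via Lemma~\ref{lemma:Iwelldefined} only annihilates irreducible $z$ whose $\Inc$ meets both $A'$ and $B'$; it says nothing about $z=(a;b;b'|a')$ above, so that route closes the mixed class (which is all the paper's displayed proof explicitly uses the lemma for) but cannot close the pure-case discrepancies. Under your strengthened hypothesis $H(A\cup A'\cup B\cup B')=H(A\cup A')+H(B\cup B')$, no lifting lemma is needed at all: Lemma~\ref{lemma:indepsys} applies verbatim to the systems $\A\cup\A'$ and $\B\cup\B'$, whose union is already the ambient four-part system, so every irreducible $z$ whose $\Inc$ meets both sides vanishes, and both your argument and the paper's go through. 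One small remaining point: your term-for-term matching of the surviving $(U,\emptyset)$ pairs with the sum defining $I(x_A'|x_A)$ over $\D_{\A\cup\A'}$ still needs a word, since $I_{\A\cup\A'}(w)$ is itself a sum over all extensions of $w$ to the four-part system and you must invoke the same cross-side vanishing to see that only the extension excluding all of $B\cup B'$ survives --- but that is the identical vanishing statement, so your proof closes.
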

\begin{proof}
  This is essentially a direct result of lemma~\ref{lemma:indepsys}.
  \begin{subequations}
    \begin{align}
      C_{\A\'\cup\B\'|\A\cup\B}(k) &= \sum_{\substack{x\in\D_{\A\cup\B}\\s(x)\ge k}}I(x\'|x)\\
                                   &= \sum_{\substack{x\in\D_{\A\cup\B}\\s(x)\ge k\\\Inc(x)\ins A}}I(x\'|x)
      + \sum_{\substack{x\in\D_{\A\cup\B}\\s(x)\ge k\\\Inc(x)\ins B}}I(x\'|x)
      + \sum_{\substack{x\in\D_{\A\cup\B}\\s(x)\ge k\\\Inc(x)\cap B\ne\0\\\Inc(x)\cap A\ne\0}}I(x\'|x)\\
      &= \sum_{\oldsubstack{x\in\D_{\A}\\s(x)\ge k}}
      \sum_{\oldsubstack{y\in\D_{\A\cup\B}\\y\le x}}I(y\'|y)
      + \sum_{\oldsubstack{x\in\D_{\B}\\s(x)\ge k}}
      \sum_{\oldsubstack{y\in\D_{\A\cup\B}\\y\le x}}I(y\'|y)
      + \sum_{\substack{x\in\D_{\A\cup\B}\\s(x)\ge k\\\Inc(x)\cap B\ne\0\\\Inc(x)\cap A\ne\0}}I(x\'|x)\\
      &= \sum_{\substack{x\in\D_\A\\s(x)\ge k}}I(x\'|x)
      + \sum_{\substack{x\in\D_\B\\s(x)\ge k}}I(x\'|x)
      + \sum_{\substack{x\in\D_{\A\cup\B}\\s(x)\ge k\\\Inc(x)\cap B\ne\0\\\Inc(x)\cap A\ne\0}}0\\
      &= C_{\A\'|\A}(k) + C_{\B\'|\B}(k)
    \end{align}
  \end{subequations}
\end{proof}

We can now reiterate our result from Section \ref{simple} that in the case where a subsystem of the system deviates completely, we can treat this as the subsystem becoming part of the environment. In our new theory, this means that the complexity of the noise is that of the deviating subsystem.
\begin{corollary}
  Given a system $\A$ with an independent subsystem $\B$ that deviates completely from it's defined behavior, and now behaves as $\B\'$, then $C_{\A\'|\A}(k) = C_{\B\'}(k)$
\end{corollary}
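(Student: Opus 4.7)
The plan is to decompose the ambient system into the deviating subsystem $\B$ and an untouched remainder $\mathcal{D}$, then apply the three previously established theorems in sequence. Write $A = B \sqcup D$, so that $\A = \B \cup \mathcal{D}$ and, because only $\B$ is assumed to deviate, $\A' = \B' \cup \mathcal{D}$. The hypothesis that $\B$ is an independent subsystem of $\A$ gives $H(B \cup D) = H(B) + H(D)$, and for the argument to go through we likewise need the post-deviation independence $H(B' \cup D) = H(B') + H(D)$; this is the natural reading that a local deviation of $\B$ into $\B'$ cannot manufacture shared information with an untouched remainder.

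First, I would apply Theorem~\ref{thm:partsys} to the decompositions $\A = \B \cup \mathcal{D}$ and $\A' = \B' \cup \mathcal{D}$ (reading the theorem with $\mathcal{D}$ playing the role of its own primed version, since it is unchanged), obtaining
$$C_{\A' \mid \A}(k) = C_{\B' \mid \B}(k) + C_{\mathcal{D} \mid \mathcal{D}}(k).$$
Second, Theorem~\ref{thm:samesys} eliminates the self-noise term: $C_{\mathcal{D} \mid \mathcal{D}}(k) = 0$. Third, the phrase ``deviates completely'' is formalized, as in item~\ref{item:newsys} of the list in Section~\ref{infotheory}, by $H(B \cup B') = H(B) + H(B')$; Theorem~\ref{thm:newsys} applied to $(\B, \B')$ then gives $C_{\B' \mid \B}(k) = C_{\B'}(k)$. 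Chaining these three equalities yields the claim.

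The main obstacle is purely definitional rather than computational: the corollary's informal hypotheses (``independent subsystem,'' ``deviates completely'') must be unpacked into the two independence equations that Theorems~\ref{thm:partsys} and~\ref{thm:newsys} demand, namely independence between $\B$ (or $\B'$) and $\mathcal{D}$, and between $\B$ and $\B'$. Once that bookkeeping is settled, the proof is a three-line chain of theorem invocations, which is fitting since the corollary is intended to articulate in the new formalism precisely the informal observation of Section~\ref{simple}---that a fully co-opted subsystem can be treated as a piece of the environment whose complexity is the complexity of the change.
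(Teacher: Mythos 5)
Your proposal is correct and takes essentially the same route as the paper's own proof: decompose $\A$ into $\B$ and the unchanged remainder $\A\setminus\B$, apply Theorem~\ref{thm:partsys}, eliminate the self-noise term via Theorem~\ref{thm:samesys}, and finish with Theorem~\ref{thm:newsys}. Your explicit unpacking of the needed independence hypotheses (including the post-deviation independence $H(B'\cup D)=H(B')+H(D)$ and the formalization of ``deviates completely'' as $H(B\cup B')=H(B)+H(B')$) is bookkeeping the paper leaves implicit, but the argument is the same.
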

\begin{proof}
  Theorem~\ref{thm:partsys} gives us that $C_{A\'|\A}(k) = C_{\A\'\setminus\B\'|\A\setminus\B}(k) + C_{\B\'|\B}(k)$. Theorem~\ref{thm:samesys} gives us that the first term is $0$ (since that part of the system does not change) and Theorem~\ref{thm:newsys} gives us that $C_{\B\'|\B}(k) = C_{\B\'}(k)$, since $B\'$ has completely new behavior from $\B$.
\end{proof}

\section{}
\label{apx:noisehascx}
In this Appendix we prove that when complexity arises in a system at a certain scale, the complexity profile of the change has that complexity as well.
\begin{theorem}
  \label{noisehascx}
  Given some scale $\kappa$, If $C_{\A}(\kappa) = 0$, then $C_{\A\'|\A}(\kappa) = C_{\A\'}(\kappa)$. In particular, if $C_{\A\'}(\kappa) > 0$, then $C_{\A\'|\A}(\kappa) > 0$.
\end{theorem}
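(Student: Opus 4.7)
The plan is to exploit the non-negativity assumption on $I$ together with the well-definedness lemma of Appendix \ref{Iwelldefined} to show that every irreducible dependency of $\A\cup\A'$ that sits below an $x\in\D_\A$ of scale $\ge\kappa$ carries zero information. Once that is in hand, the subtracted piece in the definition of $I(x'|x)$ vanishes, leaving $I(x'|x)=I(x')$ for every relevant $x$, and the theorem follows by a bijection argument.

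First I would unpack the hypothesis: $C_\A(\kappa)=\sum_{x\in\D_\A,\,s(x)\ge\kappa}I(x)=0$ is a sum of non-negative terms, so $I(x)=0$ for every irreducible $x\in\D_\A$ with $s(x)\ge\kappa$. Next, I would regard $x$ as a (now reducible) dependency of the enlarged system $\A\cup\A'$; by Lemma~\ref{lemma:Iwelldefined} the value of $I(x)$ is unchanged, so
\begin{equation}
0=I(x)=\sum_{\substack{z\in\D_{\A\cup\A'}\\ z\le x}}I(z),
\end{equation}
and non-negativity again forces $I(z)=0$ for every $z\in\D_{\A\cup\A'}$ with $z\le x$.

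Third, I would unfold $I(x'|x)$ directly from its definition:
\begin{equation}
I(x'|x)=\sum_{\substack{z\in\D_{\A\cup\A'}\\ z\le x',\, z\not\le x}}I(z)=\sum_{\substack{z\in\D_{\A\cup\A'}\\ z\le x'}}I(z)-\sum_{\substack{z\in\D_{\A\cup\A'}\\ z\le x',\, z\le x}}I(z).
\end{equation}
The first sum is $I_{\A\cup\A'}(x')$, which equals $I_{\A'}(x')=I(x')$ by Lemma~\ref{lemma:Iwelldefined}. Every term of the second sum has $z\le x$, hence $I(z)=0$ by the previous step, so the second sum vanishes. Therefore $I(x'|x)=I(x')$ whenever $s(x)\ge\kappa$.

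Finally, I would assemble the answer by summing over $x\in\D_\A$ with $s(x)\ge\kappa$. The map $x\mapsto x'$ (replace each atom of $A$ by its primed counterpart) is a bijection $\D_\A\to\D_{\A'}$ that preserves $|\Inc(\cdot)|$ and hence scale, so
\begin{equation}
C_{\A'|\A}(\kappa)=\sum_{\substack{x\in\D_\A\\ s(x)\ge\kappa}}I(x'|x)=\sum_{\substack{x'\in\D_{\A'}\\ s(x')\ge\kappa}}I(x')=C_{\A'}(\kappa),
\end{equation}
which is the claim; the final ``in particular'' follows immediately. There is really no hard step here: the only subtlety is being careful that $I(x)=0$ for an irreducible $x\in\D_\A$ translates, via the well-definedness lemma and non-negativity, into $I(z)=0$ for every finer $z\in\D_{\A\cup\A'}$ lying beneath $x$. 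Everything else is bookkeeping.
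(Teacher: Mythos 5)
Your proposal is correct and follows essentially the same route as the paper's own proof: non-negativity kills all $I(x)$ at scale $\ge\kappa$, Lemma~\ref{lemma:Iwelldefined} identifies $I_{\A\cup\A\'}(x\')$ with $I(x\')$ and forces the subtracted sum $\sum_{z\le x\',\,z\le x}I(z)$ to vanish, and the scale-preserving bijection $x\mapsto x\'$ finishes the count. Your only deviation is cosmetic---you derive $I(z)=0$ for each individual $z\le x$ via the lemma before discarding the second sum, where the paper bounds that partial sum by $I(x)=0$ directly---and if anything your version makes the implicit use of the well-definedness lemma at that step slightly more explicit.
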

\begin{proof}
  $C_{\A}(\kappa) = \sum_{\oldsubstack{x\in\D_\A\\s(x)\ge \kappa}}I(x) = 0$, and by assumption for all $x$, $I(x)\ge 0$ so for all $x\in\D_\A$ with $s(x)\ge \kappa$, and we have $I(x)=0$.
  Now we can write
  \begin{subequations}
    \begin{align}
      C_{\A\'|\A}(\kappa) &= \sum_{\substack{x\in\D_\A\\s(x)\ge \kappa}}I(x\'|x)\\
                          &= \sum_{\oldsubstack{x\in\D_\A\\s(x)\ge \kappa}}\sum_{\oldsubstack{z\in\D_{\A\cup\A}\\z\le x\'\\z\not\le x}}I(z)\\
                          &= \sum_{\oldsubstack{x\in\D_\A\\s(x)\ge \kappa}}\left(\sum_{\oldsubstack{z\in\D_{\A\cup\A\'}\\z\le x\'}}I(z) - \sum_{\substack{z\in\D_{\A\cup\A\'}\\z\le x\'\\z\le x}}I(z)\right).\\
      \intertext{The second sum in the parentheses (call it $\Gamma$) is a sum over nonnegative values, so it must be nonnegative itself ($0\le\Gamma$). But it also has a subset of the terms in $I(x)$ (with $s(x)\ge\kappa$), so we have $\Gamma\le I(x)=0$, so $\Gamma=0$. The first sum in the parentheses is exactly the definition of $I_{A\cup\A\'}(x\')$. By lemma~\ref{lemma:Iwelldefined}, $I_{\A\cup\A\'}(x\')=I_{\A\'}(x\')=I(x\')$, so we have} 
      &= \sum_{\substack{x\in\D_\A\\s(x)\ge \kappa}}I(x\')\\
      &= \sum_{\substack{x\'\in\D_{\A\'}\\s(x\')\ge \kappa}}I(x\')\\
      &= C_{\A\'}(\kappa)
    \end{align}
  \end{subequations}
\end{proof}


\begin{thebibliography}{1}

\bibitem{cprof}
B.~Allen, B.~C. Stacey, Y.~Bar-Yam, An Information-Theoretic Formalism for
  Multiscale Structure in Complex Systems  (2014).
  http://www.necsi.edu/research/multiscale/infotheory.

\bibitem{ashby}
W.~R. Ashby, Requisite variety and its implications for the control of complex
  systems, {\it Cybernetica\/}  (1958).
  http://pespmc1.vub.ac.be/books/AshbyReqVar.pdf.

\bibitem{msvariety}
Y.~Bar-Yam, Multiscale Variety in Complex Systems, {\it Complexity\/} {\bf 9}
  (2004). http://necsi.edu/research/multiscale/multiscalevariety.pdf.

\bibitem{shannon}
C.~E. Shannon, A Mathematical Theory of Communication, {\it The Bell System
  Technical Journal\/} {\bf 27} (1948).

\bibitem{anarchistfaq}
{The Anarchist FAQ Editorial Collective}, An anarchist faq.

\bibitem{smallscaleanarchy}
{The Anarchist FAQ Editorial Collective}, An anarchist faq: Do anarchists seek
  "small autonomous communities, devoted to small scale production"?

\bibitem{sigmafield}
R.~W. Yeung, A New Outlook on Shannon's Information Measures, {\it IEEE
  Transactions on Information Theory\/} {\bf 37} (1991).

\end{thebibliography}
\end{document}